\newcommand{\bp}{\bm{p}}
\newcommand{\bx}{\bm{x}}
\newcommand{\cJ}{\mathcal{J}}
\newcommand{\cU}{\mathcal{U}}
\newcommand{\cX}{\mathcal{X}}
\newcommand{\II}{\mathbb{I}}
\newcommand{\RR}{\mathbb{R}}
\newcommand{\minimize}{\mathop{\mathbf{min}}}
\newcommand{\minimizewrt}[1]{\mathop{\underset{#1}{\minimize}}}
\newcommand{\defeq}{\vcentcolon=}
\newcommand{\st}{\mathop{\mathbf{s.t.}}}
\newtheorem{problem}{Problem}
\newtheorem{theorem}{Theorem}
\newtheorem{lemma}[theorem]{Lemma}
\newtheorem*{remark}{Remark}
\newtheorem{definition}{Definition}
\colorlet{revision_color}{black}
\newcommand{\revcolor}[1]{\textcolor{revision_color}{#1}}
\title{\LARGE \bf
Safety Index Synthesis with State-dependent Control Space
}
\author{Rui Chen$^{1}$, Weiye Zhao$^{1}$, and Changliu Liu$^{1}$% <-this % stops a space
% \thanks{*This work was not supported by any organization}% <-this % stops a space
\thanks{$^{1}$Carnegie Mellon University, Pittsburgh, PA. Contact: {\tt\small \{ruic3, weiyezha, cliu6\}@andrew.cmu.edu}}%
\thanks{This work is partially supported by the National Science Foundation, Grant No. 2144489.}%
% \thanks{$^{2}$Bernard D. Researcheris with the Department of Electrical Engineering, Wright State University,
%         Dayton, OH 45435, USA
%         {\tt\small b.d.researcher@ieee.org}}%
}
\begin{document}

\maketitle
\thispagestyle{empty}
\pagestyle{empty}

%%%%%%%%%%%%%%%%%%%%%%%%%%%%%%%%%%%%%%%%%%%%%%%%%%%%%%%%%%%%%%%%%%%%%%%%%%%%%%%%
\begin{abstract}

This paper introduces an approach for synthesizing feasible safety indices to derive safe control laws under state-dependent control spaces.
The problem, referred to as Safety Index Synthesis (SIS), is challenging because it requires the existence of feasible control input in all states and leads to an infinite number of constraints.
The proposed method leverages Positivstellensatz to formulate SIS as a nonlinear programming (NP) problem. 
We formally prove that the NP solutions yield safe control laws with two imperative guarantees: forward invariance within user-defined safe regions and finite-time convergence to those regions.
A numerical study validates the effectiveness of our approach.

\end{abstract}

%%%%%%%%%%%%%%%%%%%%%%%%%%%%%%%%%%%%%%%%%%%%%%%%%%%%%%%%%%%%%%%%%%%%%%%%%%%%%%%%
\section{Introduction}

% \ruic{include parameter varying for ACC paper to cite? If so, best to add parameter estimation as well? not focused enough? or OK to assume the parameter is observable?}

% \ruic{treat c as runtime input}

% \ruic{illustrate when surrogate dynamic model is useful}

Energy functions \cite{wei2019unified} are widely used in literature to quantify safety and derive safe control strategies.
Representative functions mainly include safety indices (SI) for the safe set algorithm (SSA) \cite{liu2014control} and control barrier functions (CBF) \cite{ames2014control}.
Safety indices can often derive constraints affine on control which yields safe control laws as the solution to quadratic programming (QP) problems.
To achieve provable safety, the system guarded by the safety index should secure two critical properties: (a) \textbf{\textit{forward-invariance}}, meaning that the system should stay in a safe region once entering it, and (b) \textbf{\textit{finite-time convergence}}, meaning that the system should land in the safe region in finite time, even when starting in an unsafe state.
Moreover, those guarantees can hold only when the control constraints derived from the SI are always feasible in the first place.
Namely, in every state of interest, there must exist a control in the control space (either bounded or unbounded), that satisfies the safety constraints.
The feasibility defined above is often realized by properly designing the safety index, which we refer to as the problem of \textit{Safety Index Synthesis} (SIS) \cite{zhao2021zeroviolation, zhao2023sos}.
In that regard, SIS is similar to synthesizing CBFs for enforcing constraints \cite{ames2014control}, designing value functions in Hamilton-Jacobi (HJ) reachability analysis \cite{choi2021robust}, and constructing control Lyapunov functions for system stabilization \cite{freeman2008robust}.
However, SIS is different from those approaches in that the desired safety index refers to a specific class of energy functions normally used for collision avoidance with SSA \cite{liu2014control, lin2017real}.

The problem of SIS is non-trivial since it requires feasibility over a continuous state space, yielding an infinite number of constraints.
Recent approaches tackle this challenge either approximately on sampled states \cite{ma2022joint, dawson2022safe, wei2022nndm, liu2022inputsat} or exactly on all state via manual synthesis \cite{zhao2021zeroviolation}, Hamilton-Jacobi (HJ) reachability \cite{choi2021robust}, or sum-of-square programming (SOSP) \cite{zhao2023sos, kang2023verification}.
Although the above approaches are promising, they consider unbounded control or known constant control limits.
Regarding unbounded control, \cite{dawson2022safe} learns an approximate CBF using neural networks but without theoretical safety guarantees.
% \cite{wei2022nndm, zhao2021zeroviolation, zhao2023sos, liu2022inputsat, choi2021robust, kang2023verification} or unbounded control \cite{liu2014control, ma2022joint, dawson2022safe}.
Regarding bounded control, \cite{ma2022joint} optimizes the safety index using gradient-based methods without safety guarantees.
\cite{zhao2021zeroviolation} manually derives a synthesis rule for simplified unicycle kinematics, which is hard to generalize.
\cite{wei2022nndm} synthesizes safety indices for neural network dynamics via evolutionary search and is hard to scale.
\cite{zhao2023sos} transforms SIS into nonlinear programming via SOSP.
However, \cite{zhao2023sos} only guarantees forward-invariance, missing the property of finite-time convergence.
% In this paper, we derive a new nonlinear programming problem to solve SIS and prove that both critical guarantees are met.
Regarding CBF synthesis, \cite{liu2022inputsat} learns to minimize the risk of control saturation on counterexamples, \cite{choi2021robust} synthesizes CBF utilizing Hamilton-Jacobi reachability, and \cite{kang2023verification} focuses on bounded additive uncertainty and convex polynomial constraints on control.
% All above methods assume constant known control limits, while we consider more general state-dependent control spaces in this paper.

In practice, the control of a system is not only always bounded, but sometimes \textit{state-dependent} instead of constant.
While constant control space applies to many systems such as articulated robots and aerial robots with bounded force and torque inputs, state-dependent control space is more common when we model and control the system on a kinematic level.
For example, when modeling a quadruped robot as a second-order unicycle, the maximal acceleration at different velocities varies due to hardware limits (e.g., zero forward acceleration at top speed but non-zero acceleration when still).
% Moreover, safety constraints on dynamic inputs might interfere with the system's fundamental functionalities (e.g., quadruped standing and quadrotors staying in the air) while those on kinematic inputs will not because the dynamic control is unaltered.
On the other hand, kinematic modeling often comes in simple forms (e.g., integrators, unicycles) which facilitates control law derivations.
Hence, it is valuable to derive safe control for systems with state-dependent control spaces.

% In this paper, we derive an exact SIS approach with state-dependent control space based on SOSP and show that it can be solved as a nonlinear programming (NP) problem.
% We formally prove that the NP solutions indeed yield feasible and provably safe control laws.
% with safety guarantees in terms of forward invariance and finite-time convergence within user-defined safe regions.
In this paper, our contributions are: \textbf{(a) a novel nonlinear programming approach to exact SIS under state-dependent control space with (b) formal proof of both forward invariance with user-defined safety regions and finite-time convergence to those regions}.
In the rest of this paper, we first formulate the problem of SIS in \Cref{sec:problem}.
Then, we derive the NP problem in \Cref{sec:SIS} with formal proofs in \Cref{sec:theory}.
Finally, we present a numerical study in \Cref{sec:exp} and conclude with future work in \Cref{sec:conclusion}.

% SSA + unbounded control
% \cite{liu2014control} original ssa, unbounded control
% \cite{ma2022joint} grad optimize SI, no guarantee, unbounded control

% CBF + unbounded control
% \cite{dawson2022safe} learn approx CBF using nn; no theory; unbounded control

% SSA + known control bound
% \cite{zhao2021zeroviolation} manual synthesis, known control bound
% \cite{wei2022nndm} nndm,  known control bound
% \cite{zhao2023sos} SOSP, known control bound

% CBF + known control bound
% \cite{liu2022inputsat} learn to minimize risk of saturation on counterexamples given by critic; known control bound;
% \cite{choi2021robust} CBF + hamilton-jacobi reachability; known control bound
% \cite{kang2023verification} bounded additive uncertainty and convex polynomial constraints on the control; known control bound

\section{Problem Formulation}\label{sec:problem}

\subsection{Dynamic System}\label{sec:dynamic_system}

We consider a dynamic system with state-dependent control limits.
Let $x\in\cX\subset \RR^{N_x}$ be the system state where the state space $\cX$ is characterized by a set of constraints $\cX\defeq\{x\mid h_i(x) \geq 0, \forall i=1,\dots,N_h\}$.
Let $u\in\cU(x)$ be the control input where $\cU(x)\defeq\{u\in\RR^{N_u} \mid \underline{u}(x) \leq u \leq \bar{u}(x) \}$.
$\underline{u}:\RR^{N_x}\mapsto\RR^{N_u}$ and $\bar{u}:\RR^{N_x}\mapsto\RR^{N_u}$ define element-wise limits on $u$.
Notably, $\cU(x)$ is \textbf{state-dependent}.
As explained in the previous section, such an assumption is useful when we need to model a dynamic system using a surrogate model, for instance modeling a quadruped robot as a unicycle (see \Cref{sec:exp} for an example).
% \footnote{With 2D location $(x,y)$, longitudinal velocity $v$, and heading $\theta$, we have $\begin{bmatrix}
%     \dot{x} & \dot{y} & \dot{v} & \dot{\theta}
% \end{bmatrix}^\top = \begin{bmatrix}
%     v \cos \theta & v \sin \theta & 0 & 0
% \end{bmatrix}^\top + \begin{bmatrix}
%     \mathbf{0}_{2\times 2} \\ \bI_{2\times 2}
% \end{bmatrix} \begin{bmatrix}
%     a \\ w
% \end{bmatrix}$ where the control includes longitudinal acceleration $a$ and heading speed $w$.}.
% We also introduce $\bc \in \RR^{N_{\bc}}$ to represent any additional constant parameters known at run time (e.g., reference trajectory to follow).
The dynamics is then given by
\begin{equation}
    \dot{x} = f(x) + g(x)u, ~ u \in \cU(x),
\end{equation}
where $f: \RR^{N_x} \mapsto \RR^{N_x}$ and $g: \RR^{N_x} \mapsto \RR^{N_x\times N_u}$ are both locally Lipschitz continuous.

% \ruic{add substitutable function def?}

% \ruic{here continuous intro but numerical example starting from discrete. [?]}

\subsection{Safe Control Preliminaries}\label{sec:formulation_safe_control}
% \ruic{define safe control problem, desired properties, emphasize difference brought by bounded control space (additional requirements on varying parameters)}

\paragraph{Safety Specification}
The safety specification requires the system state to be constrained within a closed subset $\cX_S$ (i.e., \textit{spec set}) of the state space $\cX$.
We assume that $\cX_S$ is equivalent to the zero sublevel set of some piecewise smooth function $\phi_0\defeq \cX \mapsto \RR$, i.e., $\cX_S \defeq \{x\in \cX \mid \phi_0(x) \leq 0\}$.
Both $\cX_S$ and $\phi_0$ should be designed by users.
For example, to keep the distance $d$ to some obstacle above a threshold $d_\textrm{min}$, $\phi_0$ can be given as $\phi_0 = d_\mathrm{min} - d$.

\paragraph{Safe Control Objectives}
\revcolor{
First, we need to find a subset $\cX_\mathrm{safe}$ (i.e., \textit{safe set}) of $\cX_S$ such that if the state $x$ is already within $\cX_\mathrm{safe}$, it should never leave that set.
}
Formally, $x(t_0) \in \cX_\mathrm{safe} \rightarrow x(t) \in \cX_\mathrm{safe}, ~ \forall t \geq t_0$.
We refer to such property as \textbf{\textit{forward invariance (FI)}}.
Secondly, if the state $x$ is outside the safe set, it should land in the safe set in finite time.
Formally, $\forall t_0 ~\st~ x(t_0)\in \cX\setminus\cX_\mathrm{safe},~\exists t\in[t_0,\infty) \rightarrow x(t) \in \cX_\mathrm{safe}$.
We refer to such property as \textbf{\textit{finite-time convergence (FTC)}}.

\paragraph{Safe Control Backbone}
\revcolor{
There are states in $\mathcal{X}_S$ that will inevitably go to unsafe regions due to either actuation limits or dynamic limits (i.e., high relative degree).
This section handles dynamic limits, while actuation limits will be addressed in \Cref{sec:SIS_formulation}.
With high relative degree, $u$ does not appear in $\dot\phi_0$.
For instance, for a second-order system, $\dot\phi_0 =  - \dot d$ does not directly depend on the acceleration input.
% (e.g., when $\phi_0 = 0$ and $\dot d<0$).
Hence,}
we need an alternative safety quantification $\phi$ that considers the system dynamics.
For instance, for a second-order system (e.g., acceleration-driven), we can employ a first-order safety index $\phi$ (e.g., velocity-based), such that the derivative $\dot{\phi}$ depends on the input and can be used to bound $u$ to prevent unsafe states.
The safe set algorithm (SSA) \cite{liu2014control} provides a systematic approach to the design of such $\phi$, and we adopt SSA as our backbone.
SSA introduces a continuous, piece-wise smooth energy function $\phi \defeq \cX \mapsto \RR$ (i.e., \textbf{\textit{safety index}}) to quantify the safety level of states considering the system dynamics.
The general form of an $n^\mathrm{th}$ ($n\geq 0$) order safety index $\phi_n$ is given as\footnote{The general form can also include nonlinear shaping of the safety index \cite{zhao2021zeroviolation,wei2022nndm}. This paper focuses on the linear form, while the nonlinear shaping under state-dependent control space will be left for future work.}
\begin{equation}\label{def:phi_recursive}
    \begin{aligned}
        % \phi_1 &= \phi_0 + a_1 \dot{\phi}_0,~\phi_2 = \phi_1 + a_2\dot{\phi}_1,~\dots\\
        % \phi_n &= \phi_{n-1} + a_n\dot{\phi}_{n-1}\\
        \phi_n &= (1+a_1 s)(1+a_2 s)\dots(1+a_n s)\phi_0,
    \end{aligned}
\end{equation}
where $s$ is the differentiation operator.
We expand \eqref{def:phi_recursive} as
\begin{equation}\label{def:phi_root}
    \phi_n \defeq \phi_0 + \textstyle\sum_{i=1}^{n}k_i \phi^{(i)}_0.
\end{equation}
where $\phi_0^{(i)}$ is the $i^\mathrm{th}$ time derivative of $\phi_0$.
$\phi_n$ should satisfy that (a) the roots of the characteristic equation $\prod_{i=1}^n(1+a_i s) = 0$ are all negative real (to avoid overshooting of $\phi_0$), (b) $\phi_0^{(n)}$ has relative degree one to the control input.
% One can also set $\phi_n \defeq \phi^*_0 + \sum_{i=1}^{n}k_i \phi^{(i)}_0$ where $\phi^*_0$ shares the same zero sub-level set with $\phi_0$ to nonlinearly shape the safe set boundary.
For instance, for $\phi_0=d_\mathrm{min} - d$ we can use $\phi_1=d_\mathrm{min} - d - k\dot{d}$ for collision-avoidance for a second-order system.
% with a second-order system where $m\in Z^+$.
We will prove in this paper that if we enforce $\dot{\phi}_n(x, u) \leq -\eta$ when $\phi_n(x) \geq 0$ for some constant $\eta > 0$, both forward invariance within a safe set $\cX_\mathrm{safe}$ and finite-time convergence to that set are guaranteed (see \Cref{fig:SIS}).
% \ruic{concept of safe set characterization and initial set are omitted.}
Hence, the safe control law $c_{\phi_n}$ of SSA can be written as the following optimization:
\begin{equation}\label{eq:safe_control_law}
    \minimizewrt{u\in\cU(x)} \cJ(u) ~ \st ~ \dot{\phi}_n(x,u) \leq -\eta~\mathrm{if}~\phi_n(x) \geq 0 
\end{equation}
where the objective $\cJ$ is arbitrary.
If minimal deviation from some nominal control $u^r$ is desired, one can apply $\cJ(u) = \|u-u^r\|$.
In that case, \eqref{eq:safe_control_law} forms a QP similar to that in \cite{xiao2019control} using high order control barrier functions (HOCBF), with a different constraint in the QP.
Our approach guarantees both FI (as HOCBF does) and FTC (missing from HOCBF).

\begin{figure}
    \centering
    \includegraphics[width=\linewidth]{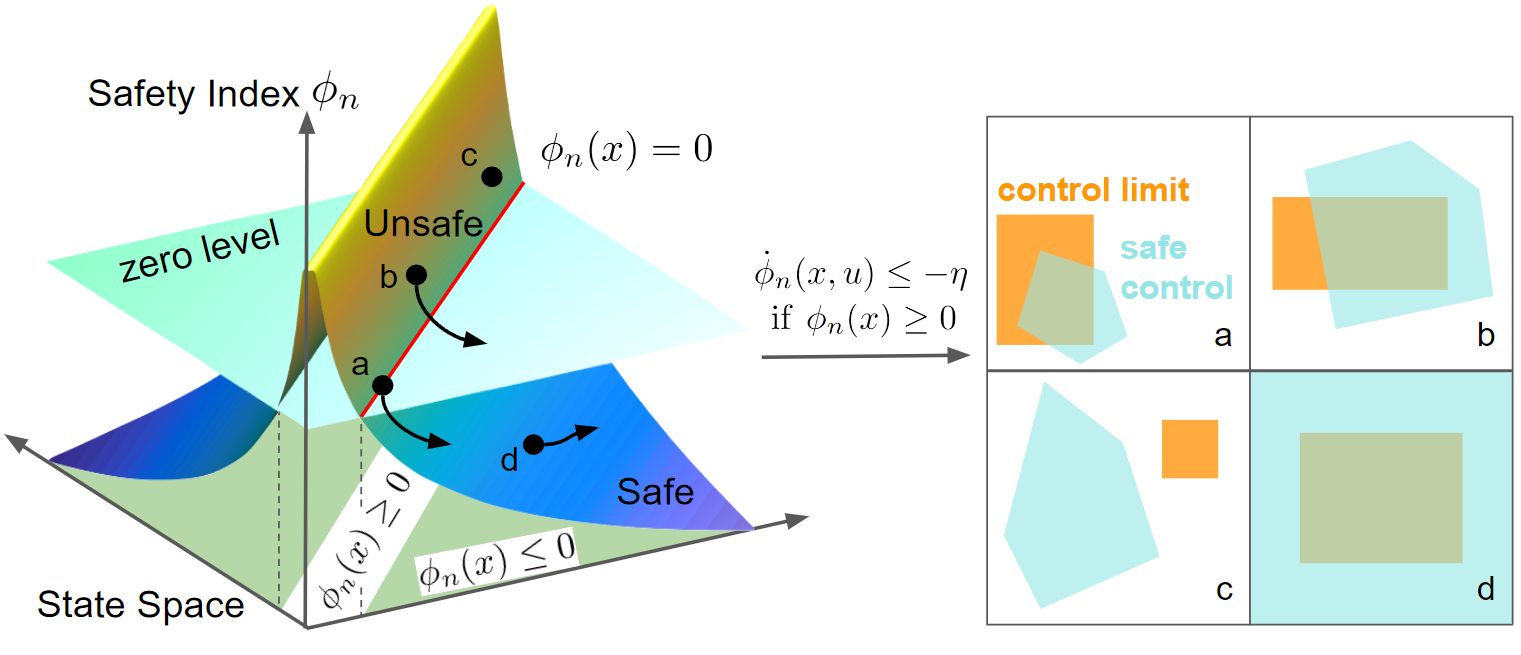}
    \caption{The safety index manifold, control spaces, and safety constraints. Feasible safe control is found in unsafe states a and b but not in c. State d is within the safe set where the safety constraint is inactive. Due to the infeasibility in state c, the SI cannot guarantee safety and is invalid.}
    \label{fig:SIS}
    \vspace{-15pt}
\end{figure}

\subsection{Safe Index Synthesis}\label{sec:SIS_formulation}

To achieve safe control objectives by implementing \eqref{eq:safe_control_law}, the optimization itself must be feasible in the first place.
Namely, we have to ensure that there is always a control within \revcolor{the actuation limit} $\cU(x)$ to reduce positive $\phi_n$ values.
% Hence, we need to instantiate the safety index $\phi$ properly to make \eqref{eq:safe_control_law} feasible at all times.
We refer to the above objective as \textit{Safety Index Synthesis} (SIS), formally given by \Cref{problem:synthesis}.
\begin{problem} [Safety Index Synthesis] \label{problem:synthesis}
    Find safety index as $\phi_\theta \defeq \phi_0 + \sum_{i=1}^{n}k_i \phi^{(i)}_0$ with parameter $\theta\in\Theta\defeq\{[k_1,k_2,\dots,k_{n}] \mid k_i\in\RR,k_i\geq 0,\forall i \}$, such that
    \begin{equation}\label{eq:synthesis}
        \forall x\in\cX ~ \st ~ \phi_\theta(x)\geq 0, \minimizewrt{u\in\cU(x)} \dot{\phi}_\theta(x,u) < -\eta
    \end{equation}
\end{problem}
$\phi_\theta$ is the $n^\mathrm{th}$ order safety index parameterized by $\theta$ and is used interchangeably with $\phi_n$ in this paper for clarity.
Note that \Cref{problem:synthesis} is difficult for having infinitely many constraints, since \eqref{eq:synthesis} needs to hold for \textbf{any} state $x~\st \phi_\theta(x) \geq 0$.
Solution to a similar SIS problem can be found in \cite{zhao2023sos}.
\cite{zhao2023sos} only considers a constant $\cU$, while this work handles state-dependent $\cU(x)$.
\cite{zhao2023sos} only considers the existence of safe control on the boundary $\phi_\theta(x)=0$, while we consider all unsafe states $\phi_\theta(x)\geq 0$.
Moreover, \cite{zhao2023sos} only makes $\phi_\theta$ non-increasing, i.e., $\dot{\phi}_\theta \leq 0$ and does not recover from unsafe states.
Our work enforces a strict decrease of $\phi_\theta$ outside safe regions which we will prove to enable FTC.
% Finally, \cite{zhao2023sos} considers a nonlinear term $\phi_0^*$ that shares the same zero sub-level set as $\phi_0$ and defines $\phi_\theta^* \defeq \phi_0^* + \sum_{i=1}^{n}k_i \phi^{(i)}_0$.
% Although $\phi_\theta^*$ has a more flexible boundary, the proofs of FI and FTC guarantees are missing.
% In this work, we prove both guarantees for the linear version \eqref{def:phi_root}, and leave the nonlinear version as future work.

Overall, compared to previous work, \Cref{problem:synthesis} poses a more general class of SIS problems with stronger safety guarantees.
Our proposed synthesis problem can easily be extended to other safe control backbones such as CBFs \cite{ames2014control} and HOCBFs \cite{xiao2019control}.
Finally, we define a safe control law to be feasible if it is derived from the result of SIS:
\begin{definition}[Feasible Safe Control Law]\label{def:feasible_control_law}
    Given a safety index $\phi_{n}$, we define the control law $c_{\phi_n}$ defined by \eqref{eq:safe_control_law} to be feasible if (i) $a_i>0,\forall i\in[n]$; (ii) $\phi_0$ is $n+1^\mathrm{th}$ order differentiable everywhere\footnote{A more general definition of feasible safe control laws requires $\phi_0$ to be $n+1^\mathrm{th}$ order \textit{piece-wise} differentiable, which is left for future work.}; and (iii) $\phi_n$ solves \Cref{problem:synthesis}.
\end{definition}

\section{Safety Index Synthesis via SOSP}\label{sec:SIS}

% To solve the SIS, we convert \Cref{problem:synthesis} to a nonlinear optimization solvable by off-the-shelf solvers.
We first overview \textit{Positivstellensatz}, a key theorem backing our derivation, and then present the conversion of \Cref{problem:synthesis} into (a) \textit{a local manifold positiveness problem}, (b) \textit{a refute problem} and finally (c) \textit{a nonlinear programming problem}.

% \ruic{short recap of SOSP}

\subsection{Preliminary: Positivstellensatz}

Problems of finding a function $f$ such that $\forall x\in\RR^{N_x},~g\circ f(x) > 0$ given some known function $g$ are called \textit{\textbf{global} positiveness problems} \cite{zhao2022provably}.
They can be converted to sum of squares programming (SOSP) problems and solved via SOSTOOLS \cite{papachristodoulou2013sostools}.
However, \Cref{problem:synthesis} limits the constraint satisfaction to a local manifold $\{x\in\cX \mid \phi_\theta(x) \geq 0\}$ instead of the whole $\RR^{N_x}$, and cannot be directly solved by SOSTOOLS.
In this work, we derive a new SOSP formulation tailored to \Cref{eq:synthesis}.
The core idea behind SOSP for enforcing a condition is to construct a \textit{refute set} which \revcolor{violates the condition and} show that the set is empty.
The emptiness can be shown using \textit{Positivstellensatz} \cite{parrilo2003sosp}, which uses \textit{ring-theoretic cone} \cite{bochnak2013real} as explained below.

% \begin{definition}[Cone]
%     Denote $\RR[x_1,\dots,x_n]$ a set of polynomials with $[x_1,\dots,x_n]$ as variables.
%     The cone $\Gamma^*$ of $\RR[x_1,\dots,x_n]$ is defined as a subset of $\RR[x_1,\dots,x_n]$ such that: (a) $a,b\in\Gamma^*\Rightarrow a+b\in\Gamma^*$; (b) $a,b\in\Gamma^*\Rightarrow a\cdot b \in\Gamma^*$; (c) $a\in\RR[x_1,\dots,x_n]\Rightarrow a^2\in\Gamma^*$.
% \end{definition}

\begin{definition}[Ring-theoretic cone]
    Denote $\RR[x_1,\dots,x_n]$ a set of polynomials with $[x_1,\dots,x_n]$ as variables.
    For a set $S\defeq\{\gamma_1,\dots,\gamma_s\}\subseteq\RR[x_1,\dots,x_n]$, the associated ring-theoretic cone is given as:
    $\Gamma = \{p_0+p_1\gamma_1+\dots+p_s\gamma_s+\dots+p_{12\dots s}\gamma_1\cdots\gamma_s\}$
    where $p_0,p_1,\dots,p_{12\dots s}\in SOS$.
\end{definition}

% With the ring-theoretic cone defined, the Positivstellensatz theorem is given by the following.

\begin{theorem}[Positivstellensatz]\label{thm:pos}
    Let $(\gamma_j)_{j=1,\dots,s}$, $(\psi_k)_{k=1,\dots,t}$, $(\zeta_l)_{l=1,\dots,r}$ be finite families of polynomials in $\RR[x_1,\dots,x_n]$.
    Let $\Gamma$ be the ring-theoretic cone generated by $(\gamma_j)_{j=1,\dots,s}$, $\Psi$ the multiplicative monoid \cite{parrilo2003sosp} generated by $(\psi_k)_{k=1,\dots,t}$, and $Z$ the ideal \cite{parrilo2003sosp} generated by $(\zeta_l)_{l=1,\dots,r}$.
    Then, the following properties are equivalent:
    \begin{itemize}
        \item The following set is empty
        \begin{equation}\label{eq:pos_refute}
            \left\{
                x\in\RR^n \left|
                \begin{aligned}
                    ~&\gamma_j(x) \geq 0, \forall j=1,\dots,s \\
                    ~&\psi_k(x) \neq 0, \forall k=1,\dots,t \\
                    ~&\zeta_l(x) = 0, \forall l=1,\dots,r
                \end{aligned}\right.
            \right\}
        \end{equation}
        \item There exists $\gamma\in\Gamma$, $\psi\in\Psi$, $\zeta\in Z$ such that
        \begin{equation}\label{eq:pos_feas}
            \gamma + \psi^2 + \zeta = 0 \mathrm{~where~} \psi = 1 \mathrm{~if~} t=0.
        \end{equation}
    \end{itemize}
\end{theorem}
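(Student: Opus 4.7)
The statement is the classical Krivine--Stengle Positivstellensatz, and the proof splits into two implications of very different depth. My plan is to treat the ``algebraic identity implies empty refute set'' direction first as a short verification, and then attack the substantial converse via ordered-field arguments.

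For the forward direction, I would argue by contradiction: suppose some $x_0 \in \RR^n$ satisfies every constraint in \eqref{eq:pos_refute}. Evaluating the identity at $x_0$ gives $\gamma(x_0) + \psi(x_0)^2 + \zeta(x_0) = 0$, and I would show each summand is nonnegative while the middle term is strictly positive, producing a contradiction. Specifically, $\gamma(x_0) \geq 0$ because $\Gamma$ is built from sums of SOS polynomials scaled by products of the $\gamma_j$, all of which are nonnegative at $x_0$; $\psi(x_0)^2 > 0$ because $\psi$ is a product of terms in $\Psi$, each nonzero at $x_0$; and $\zeta(x_0) = 0$ because $\zeta$ lies in the ideal generated by the $\zeta_l$. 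This direction is essentially bookkeeping once one unpacks the definitions of cone, monoid, and ideal.

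The converse is where the real work lies, and my plan follows the standard route through cones and orderings. First, I would pass to the quotient ring $\RR[x_1,\dots,x_n]/Z$ to absorb the equality constraints. Second, assuming no decomposition $\gamma + \psi^2 + \zeta = 0$ exists, I would use Zorn's lemma to enlarge $\Gamma$ into a maximal proper cone $P$ in the quotient that still avoids $-\psi^2$ for every $\psi \in \Psi$. Third, this maximal cone induces a total ordering on the quotient, and embedding the fraction field into its real closure produces a homomorphism into a real closed extension $K \supseteq \RR$ at which all $\gamma_j \geq 0$, all $\psi_k \neq 0$, and all $\zeta_l = 0$ simultaneously. Fourth, I would invoke the Tarski--Seidenberg transfer principle: because these constraints form a first-order formula in the language of ordered fields, satisfiability over $K$ transfers down to satisfiability over $\RR$, producing a point in the allegedly empty refute set and closing the contradiction.

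The main obstacle is the transfer step, which relies on the full Tarski--Seidenberg apparatus and the model-completeness of the theory of real closed fields; a secondary subtlety is verifying that the Zorn-lemma extension of the cone remains proper as one adjoins negations of $\psi_k^2$, which hinges on the hypothesized nonexistence of the identity to rule out $-1 \in P$. Because this result is foundational in real algebraic geometry and its complete proof occupies substantial space in references such as Bochnak--Coste--Roy, in the paper itself I would simply cite the standard reference rather than reproduce the argument; the sketch above is the roadmap I would follow for a self-contained treatment.
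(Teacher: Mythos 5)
Your proposal is correct, and it matches the paper's treatment: the paper does not prove \Cref{thm:pos} but simply cites \cite{parrilo2003sosp}, exactly as you conclude you would do. Your sketch of both directions (the elementary evaluation argument for sufficiency, and the cone-extension/real-closure/Tarski--Seidenberg route for necessity) is the standard and correct roadmap for the Krivine--Stengle Positivstellensatz.
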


The proof of \Cref{thm:pos} can be found in \cite{parrilo2003sosp}.
Intuitively, Positivstellensatz allows us to transform the problem of showing that a refute set is empty into a feasibility problem.
We will construct a refute set for \eqref{eq:synthesis} in the form of \eqref{eq:pos_refute}, and then prove it is empty by solving a feasibility problem \eqref{eq:pos_feas}.

\subsection{The Local Manifold Positiveness Problem}

We first transform \Cref{problem:synthesis} into a local manifold positiveness problem.
Following \eqref{eq:synthesis}, we have
\begin{align}
    &\minimizewrt{u\in\cU(x)} \dot{\phi}_\theta(x,u) = \minimizewrt{u\in\cU(x)} \underbrace{\frac{\partial\phi_\theta}{\partial x}f(x)}_{L_f\phi_\theta(x)} + \underbrace{\frac{\partial\phi_\theta}{\partial x}g(x)}_{L_g\phi_\theta(x)} u \\
    = &\minimizewrt{u\in\cU(x)} L_f\phi_\theta + \textstyle\sum_{i=1}^{N_u} L_g\phi_\theta[i]u[i] \\
    = &L_f\phi_\theta + \textstyle\sum_{i=1}^{N_u} L_g\phi_\theta[i] \delta_{L_g\phi_\theta[i]}\left(\underline{u}(x)[i], \bar{u}(x)[i]\right)
\end{align}
where $\delta_{c}(a,b)=a$ if $c\geq 0$ otherwise $b$.
Hence, we acquire:
\begin{problem}[Local Manifold Positiveness]\label{problem:localpos}
    Find $\theta\in\Theta$ such that $\forall x\in\cX ~ \st ~ \phi_\theta(x)\geq 0$,
    \begin{equation}
    L_f\phi_\theta + \textstyle\sum_{i=1}^{N_u} L_g\phi_\theta[i] \delta_{L_g\phi_\theta[i]}\left(\underline{u}(x)[i], \bar{u}(x)[i]\right) < -\eta
    \end{equation}
    where we focus on the \textbf{local manifold} $\cX\cap\{x\mid\phi_\theta(x)\geq 0\}$.
    % instead of the full (global) vector space $\RR^{N_x}$.
\end{problem}

\begin{remark}
    It is sufficient to consider only the local manifold to achieve the safe control objectives in \Cref{sec:formulation_safe_control}.
    That also yields a less conservative (but still feasible) safety index than a global positiveness problem would give \cite{zhao2023sos}.
\end{remark}

\subsection{The Refute Problem}
We now constructing a refute set for \Cref{problem:localpos} as:
\begin{equation}\label{eq:refute_set}
    \begin{cases}
        L_f\phi_\theta + \sum_{i=1}^{N_u} L_g\phi_\theta[i] \delta_{L_g\phi_\theta[i]}\left(\underline{u}(x)[i], \bar{u}(x)[i]\right) \geq -\eta \\
        x \in \cX, ~ \phi_\theta(x) \geq 0
    \end{cases}
\end{equation}

It can be shown that finding $\theta$ to make \eqref{eq:refute_set} empty solves \Cref{problem:localpos}.
Let $I^+,I^-\subseteq[N_u]$ denote the set of indices of non-negative and negative elements in $L_g\phi_\theta$ (hence $I^+\cap I^-=[N_u]$), we can re-write \eqref{eq:refute_set} as:
\begin{equation}\label{eq:refute_set_I}
    \begin{cases}
        \gamma_1 \defeq L_f\phi_\theta + \sum_{i\in I^+} L_g\phi_\theta[i] \underline{u}(x)[i]\\
        ~~~~~~~~~~~~~~+ \sum_{i\in I^-} L_g\phi_\theta[i] \bar{u}(x)[i] \geq -\eta \\
        \gamma_{1+i} \defeq h_i(x) \geq 0,~\forall i=1,\dots,N_h \\
        \gamma_{1+N_h+i} \defeq L_g\phi_\theta(x)[i] \geq 0,~\forall i\in I^+ \\
        \gamma_{1+N_h+|I^+|+i} \defeq -L_g\phi_\theta(x)[i] \geq 0,~\forall i\in I^- \\
        \gamma_{2+N_h+N_u} \defeq \phi_\theta(x) \geq 0 \\
        \zeta_i(x) = 0,~\forall i=1,\dots,N_\zeta
    \end{cases}
\end{equation}
where $\{h_i\}_i$ define the state space $\cX$ (see \Cref{sec:dynamic_system}) and $\{\zeta_i\}_i$ are auxiliary functions to be constructed on demand.
\revcolor{
Note that $\gamma_i$ and $\zeta_i$ are functions of $x$.
}
For example, if the state $x$ contains elements $x[i]=\sin\alpha$ and $x[j]=\cos\alpha$ for some angle $\alpha$, one can employ $\zeta_l(x) = x[i]^2+x[j]^2-1$.
We need to consider all possible values of $L_g\phi_\theta$ and construct the set \eqref{eq:refute_set_I} for each of $2^{N_u}$ possible assignments of $I^+$ and $I^-$.
The refute problem is then formalized as \Cref{problem:refute}.

\begin{problem}[Refute]\label{problem:refute}
    Find $\theta\in\Theta$ such that $\forall I^+\subseteq [N_u],I^-=[N_u]\setminus I^+$, $\nexists x\in\RR^{N_x}$ such that \eqref{eq:refute_set_I} holds.
\end{problem}

\begin{remark}
    \Cref{problem:refute} shows that at most $2^{N_u}$ refute sets are empty.
    In practice, $L_g\phi_\theta(x)$ is limited by the bounded state space $\cX$, which renders certain values of $(I^+,I^-)$ impossible and reduces the refute sets to prove empty.
\end{remark}

\subsection{The Nonlinear Programming Problem}

% Given the refute \Cref{problem:refute}, we can now construct a feasibility problem following \eqref{eq:pos_feas}.
Since all conditions in \eqref{eq:refute_set_I} correspond to the positiveness conditions in \eqref{eq:pos_refute}, \Cref{thm:pos} allows us to solve \Cref{problem:refute} by reposing it as a feasibility problem.
Let $N\defeq 2+N_h+N_u$, we find $\theta\in\Theta$, $p'_i\in\RR[x]$, and $p_i\in SOS,~\forall i$ such that,
\begin{equation}\label{eq:nonlinear_pos}
    \begin{aligned}
        &\gamma = p_0 + p_1\gamma_1 + p_2\gamma_2 + \dots + p_N\gamma_N + \\
        &~~~~~~~~p_{12}\gamma_{1}\gamma_{2} + \dots + p_{12\dots N}\gamma_{1}\dots\gamma_{N} \\
        &\psi = 1,~\zeta = \textstyle\sum_{i=1}^{N_\zeta}p'_i\zeta_i,~\gamma + \psi^2 + \zeta = 0
    \end{aligned}
\end{equation}
for each possible $(I^+,I^-)$.
To efficiently solve the problem, we set $p_i,p'_i\in\RR, p_i\geq 0$ for $i>0$, and rewrite \eqref{eq:nonlinear_pos} as
\begin{equation}\label{eq:nonlinear_sos}
    \begin{aligned}
        &\exists p_i, p'_i \in \RR, p_i \geq 0 ,~ \forall i>0 ~ \st \\
        &p_0 = -1 - \textstyle\sum_{i=1}^{N_\zeta}p'_i\zeta_i - p_1\gamma_1 - p_2\gamma_2 - \dots - p_N\gamma_N \\
        &~~~~~~~~ -p_{12}\gamma_{1}\gamma_{2} - \dots - p_{12\dots N}\gamma_{1}\dots\gamma_{N} \in SOS
    \end{aligned}
\end{equation}
\revcolor{
Note that $p_0$ now becomes a polynomial of $x$ where the coefficients contain decision variables $\theta$, $p_i$, and $p_i'$.
}
Our goal is to find real-valued coefficients such that $p_0$ is a sum of squares polynomial of $x$.
One possible way is to decompose $p_0$ in quadratic form with a positive semi-definite coefficient matrix.
Namely, assume $p_0$ has degree $2d$ and let $\bx\defeq\left[1,x[1],\dots,x[N_x],x[1]x[2],\dots,x[N_x]^d\right]^\top$, $\bp\defeq [p'_1,\dots,p'_{N_\zeta}, p_1,p_2,\dots,p_{012\dots N}]^\top$, the decomposition is given by $p_0 = \bx^\top Q(\theta,\bp)\bx$.
$Q(\theta,\bp)$ is a symmetric matrix with $Q[i,i]$ to be the coefficient of $\bx[i]^2$ in $p_0$ and $Q[i,j]$ ($i\neq j$) to be half of the cofficient of $\bx[i]\bx[j]$ in $p_0$.
It can be shown that if $Q(\theta,\bp)\succeq 0$, $p_0$ is an SOS polynomial and \eqref{eq:nonlinear_sos} holds.
Let $Q_i(\theta,\bp_i)$ denote the matrix from the condition \eqref{eq:nonlinear_sos} generated by the $i^\mathrm{th}$ assignment of $(I^+, I^-)$.
\Cref{problem:refute} is equivalent to the following problem:
\begin{problem}[Nonlinear Programming]\label{problem:nonlinear}
    Find $\theta\in\Theta$ and $\bp_1,\bp_2,\dots,\bp_{2^{N_u}}$ where $\forall i,~ \bp_i[j] \in \RR$ for $j > 0$ and $\bp_i[j] \geq 0$ for $j > N_\zeta$, such that $\forall i,~Q_i(\theta, \bp_i) \succeq 0$.
\end{problem}
\revcolor{Due to the product of $p_i$ and $\gamma_i$ (which contains $\theta$), constraints $Q_i(\theta, \bp_i) \succeq 0$ in are non-convex.}
\revcolor{
We will use general solvers such as} \verb|fmincon| \revcolor{and check the PSD constraints via eigenvalue decomposition}.
\revcolor{
We leave speciality solutions leveraging the structure of $Q_i$ as future work.
}

\begin{remark}
\revcolor{
    One limitation of ours is the scalability.
    In \Cref{problem:nonlinear}, there are two nested exponential complexities with respect to $N_u$ and $N_h$, mainly due to the poor scalability of Positivstellensatz.
    % Although alternative approaches such as HJ reachability \cite{choi2021robust} are more scalable, they solve the SIS only approximately.
    % Hence, our approach is suitable for low-dimensional systems.
    One can alleviate the complexity by computing eigenvalues of $2^{N_u}$ $Q_i$'s in parallel.
    In \Cref{sec:exp}, we will show that \Cref{problem:nonlinear} can be solved by current solvers.
}
\end{remark}

\section{Theoretical Results}\label{sec:theory}

In this section, we formally prove that solving \Cref{problem:nonlinear} yields a feasible safety index described in \Cref{sec:formulation_safe_control}.
We first prove that solutions to \Cref{problem:synthesis} \revcolor{guarantee both} forward invariance within a safe set $\cX_\mathrm{safe}$ and finite-time convergence to that set, then prove that the solution to \Cref{problem:nonlinear} also solves \Cref{problem:synthesis}.
In literature, \cite{liu2014control} provides a preliminary proof of FI given a feasible safety index (e.g., a solution to \Cref{problem:synthesis}).
The proof however leverages an unproved key assumption\footnote{\cite{liu2014control} also proves FI by induction, but based on an unjustified assumption: if a state $x$ is not reachable under \eqref{eq:safe_control_law}, we have $\phi_n(x) > 0$. That assumption depends on the characterization of the reachable set which is absent.}.
In this work, we prove FI without that assumption.
In addition, the SIS approach by \cite{zhao2023sos} only considers a special case of \Cref{problem:synthesis}, lacking the FTC.
% We start with the definition of initial system state that prerequisites the existence of feasible safe control:
% \begin{definition}[Static Initial Condition $\cX_0$]\label{def:init_condi}
%     Define $\cX_0 \defeq \{x \in \cX_S \mid \phi^{(n)}_0(0) = 0,~\forall n>0\}$.
% \ruic{change to characterization of forward invariance set}
% \end{definition}
% We then introduce the reachable states from $\cX_0$ following the feasible safe control law \eqref{eq:safe_control_law} with safety index $\phi$:
To proceed, we first define the $m^\mathrm{th}~(m\in[0,n])$ principal set as the intersection of zero sub-level sets of the $m+1$ highest order safety indices defined by $\phi_n$.

\begin{definition}[Principal Set $\cX_m(\phi_n)$]
    Given a safety index $\phi_n = \prod_{i=1}^{n}(1+a_i s)\phi_0$ ($n\geq 0$), the $m^\mathrm{th}~(m\in[0,n])$ principal set $\cX_m(\phi_n) \defeq \{x \mid \phi_j(x) \leq 0,~\forall j=n-m,\dots,n\}$ where $\phi_j = \prod_{i=1}^{j}(1+a_i s)\phi_0$.
\end{definition}

We define the safe set $\cX_\mathrm{safe}\equiv \cX(\phi_n)$ within which the system will be forward invariant for an $n^\mathrm{th}$ order $\phi_n$.

\begin{definition}[Safe Set $\cX(\phi_n)$]\label{def:safe_set}
    Given a safety index $\phi_n$ ($n\geq 0$), we define the safe set $\cX(\phi_n)$ as the $n^\mathrm{th}$ principal set, i.e., $\cX(\phi_n)\defeq \cX_n(\phi_n)$.
    It follows that $\cX(\phi_0) \equiv \cX_S$.
\end{definition}
% \begin{definition}[Reachable Set $\cX(\phi_n)$]\label{def:reachable_set}
%     Given a feasible control law $c_{\phi_n}$, $\cX(\phi_n)=\{x(t) \mid x(0) \in X_0, u(\tau)=c_{\phi_n}(x(\tau)),~\forall \tau \in [0, t] \}$.
% \end{definition}

% Safe control laws of interest are defined as follows.

We now prove that each of the principal sets is invariant before presenting the core theorems of this paper.

\begin{lemma}[Forward Invariance of Principal Sets]\label{lma:FIPS}
    If a feasible safe control law $c_{\phi_n}$ as in \Cref{def:feasible_control_law} is applied, then the principal set $\cX_m(\phi_n)$ is invariant for $m\in[0,n]$.
    Namely, $\forall x(0) \in \cX_m(\phi_n) \rightarrow x(t) \in \cX_m(\phi_n)$ for $t>0$.
\end{lemma}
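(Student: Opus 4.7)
The plan is to proceed by induction on $m \in \{0,1,\dots,n\}$, relying on the structural identity $\phi_{j+1} = (1+a_{j+1}s)\phi_j = \phi_j + a_{j+1}\dot\phi_j$, which holds for all $j < n$ because $\phi_0$ is assumed $n{+}1$ times differentiable (Definition \ref{def:feasible_control_law}(ii)). The feasibility of $c_{\phi_n}$ gives us the base-level decrement $\dot\phi_n \leq -\eta$ whenever $\phi_n \geq 0$, and the strict positivity $a_{j+1} > 0$ (Definition \ref{def:feasible_control_law}(i)) lets us convert the identity above into a linear differential inequality on each $\phi_j$.

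For the base case $m=0$, I would show that $\cX_0(\phi_n) = \{x \mid \phi_n(x)\leq 0\}$ is invariant by contradiction. Suppose $x(0) \in \cX_0(\phi_n)$ but $T \defeq \inf\{t > 0 : \phi_n(x(t)) > 0\} < \infty$. By continuity of $\phi_n$ along the trajectory, $\phi_n(x(T)) = 0$, so the safety constraint is active and $\dot\phi_n(x(T), u(T)) \leq -\eta < 0$; this contradicts the definition of $T$ as a time where $\phi_n$ begins to exceed $0$. Hence $\phi_n(x(t)) \leq 0$ for all $t \geq 0$.

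For the inductive step, assume $\cX_{m-1}(\phi_n)$ is invariant and let $x(0) \in \cX_m(\phi_n) \subseteq \cX_{m-1}(\phi_n)$. By the hypothesis, $\phi_j(x(t)) \leq 0$ for all $t \geq 0$ and $j = n-m+1,\dots,n$; it remains to show $\phi_{n-m}(x(t)) \leq 0$ for all $t \geq 0$. The identity $\phi_{n-m+1} = \phi_{n-m} + a_{n-m+1}\dot\phi_{n-m}$ together with $\phi_{n-m+1} \leq 0$ and $a_{n-m+1} > 0$ yields the differential inequality $\dot\phi_{n-m}(x(t)) \leq -\phi_{n-m}(x(t))/a_{n-m+1}$. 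Multiplying by the integrating factor $e^{t/a_{n-m+1}}$ shows that $t \mapsto e^{t/a_{n-m+1}}\phi_{n-m}(x(t))$ is non-increasing, so $\phi_{n-m}(x(t)) \leq e^{-t/a_{n-m+1}}\phi_{n-m}(x(0)) \leq 0$, closing the induction.

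The main obstacle is the base case: because the safety constraint is only enforced on $\{\phi_n \geq 0\}$, I cannot directly bound $\dot\phi_n$ in the interior $\phi_n < 0$, and must instead rule out escape via the continuity-plus-contradiction argument sketched above. A secondary subtlety is ensuring that the identity $\phi_{j+1} = \phi_j + a_{j+1}\dot\phi_j$ is evaluated along a solution trajectory that is sufficiently smooth for the integrating-factor manipulation to be rigorous; this is exactly what the differentiability clause in Definition \ref{def:feasible_control_law}(ii) is designed to guarantee.
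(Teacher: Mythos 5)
Your proof is correct and follows essentially the same route as the paper's: induction over the orders $j=n,\dots,n-m$, with the base case closed by the control constraint $\dot\phi_n\le-\eta$ on $\{\phi_n\ge 0\}$ and the inductive step by the identity $\phi_{j}=\phi_{j-1}+a_j\dot\phi_{j-1}$ with $a_j>0$. The only difference is technical: where the paper integrates the derivative over the maximal connected subinterval of $\{\phi\ge 0\}$ ending at $t$ and derives a contradiction, you use a first-exit-time argument for the base case and an integrating-factor (comparison-lemma) bound $\phi_{n-m}(x(t))\le e^{-t/a_{n-m+1}}\phi_{n-m}(x(0))$ for the inductive step, which is slightly cleaner and additionally yields exponential decay.
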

\begin{proof}
    We write $\phi(t)\equiv\phi(x(t))$ for clarity.
    For any $x(0) \in \cX_m(\phi_n)$, we have $\phi_j(0) \leq 0,~\forall j\in[n-m,n]$.
    We will show $\phi_j(t) \leq 0$ for $t>0,~j\in[n,n-m]$ by induction.
    
    Starting with $j=n$, note that $\phi_0$ is $n+1^\mathrm{th}$ order differentiable, making $\phi_n$ differentiable and also continuous.
    Since $\phi_n$ solves \Cref{problem:synthesis}, the safe control law \eqref{eq:safe_control_law} is always feasible, e.g., $\dot{\phi}_n(\tau) < 0$ when $\phi_n(\tau) \geq 0$ for $\tau\geq 0$.
    To prove by contradiction, assume $\phi_n(0) \leq 0$, but $\phi_n(t) > 0$.
    Find $[t', t]$ as the maximal connected set in $\phi_n^{-1}([0,\infty))\cap [0, t]$.
    Then, $\phi_n(t') = 0$ and for $\tau\in[t', t]$, $\phi_n(\tau) \geq 0 \rightarrow \dot{\phi}_n(\tau) < 0$.
    Since $\phi_n$ is differentiable, $\phi_n(t) = \phi_n(t') + \int_{t'}^t \dot{\phi}_n(\tau) d\tau < \phi_n(t') = 0$ which contradicts with the assumed $\phi_n(t) > 0$.
    Hence, $\phi_n(t) \leq 0$.
    
    For $n \geq j > n-m$, we have $\phi_{j} = \phi_{j-1} + a_j\dot{\phi}_{j-1}$.
    % For any $x(t_0) = a \in \cX(\phi_n)$ at $t_0 \geq 0$, there exists an initial state $a_0 \in \cX_0$ that leads to $a$ at time $t_0$, i.e., $x(0)=a_0 \in \cX_0$.
    % Then, for all $t \geq t_0$, $x(t)$ is reachable from $a_0 \in X_0$ and hence $x(t) \in \cX(\phi_n)$, rendering $\cX(\phi_n)$ invariant.
    % The proof of $X(\phi_n^*)$ case is similar.
    % We now prove $\cX(\phi_n) \subseteq \cX_S$ in two steps: (i) $\phi_n(t) \leq 0 \Rightarrow \phi_0(t) \leq 0$ for all $t \geq 0$ if $x(0) \in \cX_0$
    % \textcolor{red}{[not using def. of $X(\phi)$ but also no spec on control law as long as $\phi_n \leq 0$]}
    % and (ii) $\cX(\phi_n) \subseteq \cX_S$.
    % (i) can be proved by induction.
    % The $n=0$ case is obvious.
    % For $n \geq 1$, we have $\phi_n = \phi_{n-1} + a_n \dot{\phi}_{n-1}$.
    $\forall t\geq 0$, assume $\phi_{j}(t) \leq 0$,
    % \textcolor{red}{ [no spec on control law] }.
    % It suffices to show that $\phi_n(t) \leq 0 \Rightarrow \phi_{n-1}(t) \leq 0$.
    % \textcolor{red}{ [no spec on control law] }.
    we have $\phi_j(\tau) \leq 0$ for $\tau\in[0, t]$.
    To prove by contradiction, assume $\phi_{j-1}(t) > 0$.
    Find $[t', t]$ as the maximal connected set in $\phi_{j-1}^{-1}([0, \infty]) \cap [0, t]$.
    Since $\phi_{j-1}(0) \leq 0$, we have $\phi_{j-1}(t')=0$ and $\phi_{j-1}(\tau) \geq 0$ for $\tau\in[t', t]$.
    Since $a_j > 0$, so $\dot{\phi}_{j-1}(\tau) \leq 0$ in $[t', t]$.
    Then, $\phi_{j-1}(t) = \phi_{j-1}(t') + \int_{t'}^{t} \dot{\phi}_{j-1}(\tau) d\tau \leq \phi_{j-1}(t') = 0$
    This contradicts with the assumed $\phi_{j-1}(t) > 0$.
    % Hence $\phi_n(t) \leq 0 \Rightarrow \phi_{n-1}(t) \leq 0 \Rightarrow \phi_0(t) \leq 0$ and (i) holds.
    Hence $\forall t>0,~\phi_j(t) \leq 0 \rightarrow \phi_{j-1}(t) \leq 0$ which completes the induction.
    Hence $x(t) \in \cX_m(\phi_n)$ and $\cX_m(\phi_n)$ is invariant.
\end{proof}

\begin{theorem}[Forward Invariance within the Safe Set]\label{thm:FISSA}
    If a feasible safe control law $c_{\phi_n}$ in \Cref{def:feasible_control_law} is applied, then the safe set $\cX(\phi_n)$ in \Cref{def:safe_set} is invariant and also a subset of $\cX_S$.
    % \ruic{change to start from anywhere in X(phi)}
    % \ruic{Find better way to state $\phi_0^*$ case. The recursive proof doesn't apply to $\phi_0^*$.}
    % Moreover, if $\phi_0^*$ defines the same zero sub-level set as $\phi_0$ does, i.e., $\{x\mid\phi_0^*(x)\leq 0\}=\cX_S$, then $\phi_n$ as in \eqref{def:phi_recursive} still defines an invariant set (i.e., $X(\phi_n^*)$) in $X_S$.
\end{theorem}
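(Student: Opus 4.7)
The plan is to derive both halves of the theorem directly from Lemma 1 and the definitions, since the safe set has been constructed precisely so that the statement reduces to book-keeping. I would first split the claim into (i) forward invariance of $\cX(\phi_n)$, and (ii) the inclusion $\cX(\phi_n) \subseteq \cX_S$, and treat them in turn.

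For (i), I would instantiate Lemma 1 at $m = n$. By Definition 2 the safe set is exactly the top-level principal set, $\cX(\phi_n) \equiv \cX_n(\phi_n)$, so Lemma 1 immediately yields that any closed-loop trajectory under the feasible safe control law $c_{\phi_n}$ starting in $\cX(\phi_n)$ remains in $\cX(\phi_n)$ for all $t > 0$. No additional argument is needed beyond citing the lemma.

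For (ii), I would expand the principal-set definition at $m = n$: $\cX_n(\phi_n) = \{x \mid \phi_j(x) \leq 0,~\forall j = 0, 1, \dots, n\}$. In particular any $x \in \cX(\phi_n)$ satisfies $\phi_0(x) \leq 0$ and $x \in \cX$ (the latter being implicit from the ambient state space in which $\phi_n$ is defined), hence by the definition $\cX_S \defeq \{x \in \cX \mid \phi_0(x) \leq 0\}$ the point lies in $\cX_S$.

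There is essentially no main obstacle: the heavy lifting was already done inside Lemma 1, namely the induction from $j = n$ down to $j = 0$ that uses the positivity of each $a_i$ together with the feasibility guaranteed by $\phi_n$ solving Problem 1. The only point I would take care to verify explicitly in the write-up is that points of $\cX_n(\phi_n)$ are understood as elements of $\cX$, so that the subset claim aligns with the spec-set definition verbatim.
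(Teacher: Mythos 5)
Your proposal matches the paper's proof exactly: invariance follows by instantiating Lemma~\ref{lma:FIPS} at $m=n$ with $\cX(\phi_n)\equiv\cX_n(\phi_n)$, and the inclusion in $\cX_S$ follows since membership in $\cX_n(\phi_n)$ forces $\phi_0(x)\leq 0$. No gaps; the argument is correct and takes the same route.
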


\begin{proof}
    Since the safe set $\cX(\phi_n)$ is the $n^\mathrm{th}$ principal set, i.e., $\cX(\phi_n)\defeq \cX_n(\phi_n)$, $\cX(\phi_n)$ is invariant by \Cref{lma:FIPS}
    By definition, $\forall x\in\cX(\phi_n), \phi_0(x) \leq 0 \rightarrow x\in \cX_S$, hence $\cX(\phi_n)$ is a subset of $\cX_S$.
    % To prove (ii), we first notice that since the control law \eqref{eq:safe_control_law} is always feasible, $\dot{\phi}_n \leq 0$ when $\phi_n \geq 0$.
    % We can similarly show that $\phi_n(t) \leq 0$ for $t\geq 0$ if $\phi_n(0)\leq 0$.
    % Then $\forall x(t) \in \cX(\phi_n)$, we have $x(0)\in\cX_0$, so $\phi_n(0)\leq 0$.
    % Hence $\phi_n(t) \leq 0 \Rightarrow \phi_0(t) \leq 0 \Rightarrow x(t) \in \cX_S$.
    % Hence $\cX(\phi_n) \subseteq \cX_S$.
    % Summarizing the above, $\phi_n$ yields a forward invariant set $\cX(\phi_n)$ within the safe set $\cX_S$.
    % Finally, we prove the $X(\phi_n^*)$ case also in two steps: (i) $\phi_n^* \leq 0 \Rightarrow \phi_0 \leq 0$, and (ii) $X(\phi_n^*) \subseteq X_S$. \ruic{to prove}
\end{proof}

\begin{remark}
    Both SSA (this work) and HOCBF \cite{xiao2019control} prove forward invariance, but based on different control constraints.
    SSA enforces a constant bound on $\dot{\phi}_n$, while that bound in HOCBF depends on the barrier function value.
\end{remark}

\begin{theorem}[Finite-time Convergence to the Safe Set]\label{thm:FCSSA}
    If a feasible safe control law $c_{\phi_n}$ (\Cref{def:feasible_control_law}) is applied, the system will land in $\cX(\phi_n)$ in finite time and stay within $\cX(\phi_n)$, i.e., $\exists t\in[0, \infty) \rightarrow x(\tau)\in\cX(\phi_n),~\forall \tau\geq t$.
\end{theorem}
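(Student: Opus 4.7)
The plan is to combine a finite-time decrease argument for the top-level safety index $\phi_n$ with a descending induction on $j$ that drives each lower-order index $\phi_j$ into its zero sublevel set in finite time, after which \Cref{thm:FISSA} provides forward invariance for all subsequent times.

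First I would handle the base case: showing that $\phi_n$ reaches zero in finite time. Since $\phi_n$ solves \Cref{problem:synthesis} and the control law \eqref{eq:safe_control_law} enforces $\dot{\phi}_n \leq -\eta$ whenever $\phi_n \geq 0$, a contradiction-plus-integration argument identical in spirit to the $j = n$ step in the proof of \Cref{lma:FIPS} yields $\phi_n(T_n) \leq 0$ at some $T_n \leq \max(\phi_n(0), 0)/\eta$. Applying \Cref{lma:FIPS} with $m = 0$ at $T_n$ then guarantees $\phi_n(t) \leq 0$ for all $t \geq T_n$.

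Next I would run a descending induction from $j = n$ down to $j = 1$: assuming a finite $T_j$ after which $\phi_k(t) \leq 0$ for every $k \in \{j,\ldots,n\}$, I want a finite $T_{j-1} \geq T_j$ after which also $\phi_{j-1}(t) \leq 0$. Combining $\phi_j = \phi_{j-1} + a_j \dot{\phi}_{j-1} \leq 0$ with $a_j > 0$ gives the linear differential inequality $\dot{\phi}_{j-1}(t) \leq -\phi_{j-1}(t)/a_j$ on $[T_j,\infty)$. Once $T_{j-1}$ is located, \Cref{lma:FIPS} applied to $\cX_{n-j+1}(\phi_n)$ at that moment preserves the entire block $\phi_{j-1},\phi_j,\ldots,\phi_n \leq 0$ for all subsequent times. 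After $n$ such inductive stages the trajectory satisfies $x(T_0) \in \cX_n(\phi_n) = \cX(\phi_n)$ at some finite $T_0$, and \Cref{thm:FISSA} then delivers the invariance $x(\tau) \in \cX(\phi_n)$ for all $\tau \geq T_0$, which is exactly the stated claim.

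The hard part will be the inductive step: the linear differential inequality by itself only yields the exponential bound $\phi_{j-1}(t) \leq \phi_{j-1}(T_j)\exp(-(t-T_j)/a_j)$, which approaches zero asymptotically but does not cross it in finite time. To genuinely close the induction I would exploit the strict margin $\eta$ built into \Cref{problem:synthesis}: when $\phi_j$ first touches zero its derivative is at most $-\eta_j$ (a positive quantity inherited recursively from the top-level $\eta$ through the characteristic factorization in \eqref{def:phi_recursive}), so $\phi_j$ must dip strictly below zero on a nontrivial interval, which upgrades the bound to $\dot{\phi}_{j-1} \leq -\phi_{j-1}/a_j - c_j/a_j$ for some $c_j > 0$ and yields a finite crossing time on the order of $a_j \phi_{j-1}(T_j)/c_j$. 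Making this propagation of the negative offset from layer $n$ down to layer $j-1$ rigorous — most likely via continuity of each $\dot{\phi}_k$ near its zero boundary and compactness of the relevant sublevel sets inside $\cX$ — is the key technical step on which the whole theorem hinges.
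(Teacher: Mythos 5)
Your skeleton --- finite-time zero-crossing of $\phi_n$ at rate $\eta$, followed by a descending induction using $\phi_j = \phi_{j-1} + a_j\dot{\phi}_{j-1} \le 0$ together with \Cref{lma:FIPS} and \Cref{thm:FISSA} --- is exactly the structure of the paper's proof, and your base case (crossing time at most $\phi_n(0)/\eta$) is correct. The ``hard part'' you flag is genuine: the paper closes the inductive step by asserting that because $\phi_{n-m-1}$ ``strictly decreases when it is positive'' it must reach zero at some finite $t_{m+1}$, which, as you observe, does not follow --- the inequality $\dot{\phi}_{j-1} \le -\phi_{j-1}/a_j$ only yields $\phi_{j-1}(t) \le \phi_{j-1}(T_j)e^{-(t-T_j)/a_j}$, an asymptotic statement. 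So you have correctly located the weak point of the argument.

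Your proposed repair, however, does not go through. The margin $\eta$ is enforced only on the set $\{\phi_n \ge 0\}$; once an index $\phi_j$ has become negative, the control law imposes no constraint keeping it bounded away from zero, so there is no positive constant $c_j$ with $\phi_j(t) \le -c_j$ for large $t$, and the offset cannot be propagated downward. Concretely, a trajectory along which $\phi_j(t) = -\epsilon e^{-t}$ is consistent with every hypothesis (it is negative, so the safety constraint is inactive), and solving $\phi_{j-1} + a_j\dot{\phi}_{j-1} = -\epsilon e^{-t}$ with $a_j = 2$ gives $\phi_{j-1}(t) = Ce^{-t/2} + \epsilon e^{-t}$, which is strictly positive for all finite $t$ whenever $C > 0$. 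Hence the inductive step remains open in your proposal, just as it is unjustified in the paper; closing it honestly would require either strengthening the constraint (e.g., enforcing $\dot{\phi}_n \le -\eta$ on an enlarged set $\{\phi_n \ge -\epsilon\}$, which does yield a uniform negative offset and a finite crossing time of order $a_j\phi_{j-1}(T_j)/\epsilon$) or weakening the claim for the lower-order indices to asymptotic convergence.
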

\begin{proof}
    We consider whether the system starts in $\cX(\phi_n)$ or not.
    If $x(0)\in\cX(\phi_n)$, we have $x(\tau)\in\cX_S$ for $\tau\geq 0$ since $\cX(\phi_n)\subseteq\cX_S$ is invariant by \Cref{thm:FISSA}.
    If $x(0)\notin\cX(\phi_n)$, we will first prove that $\exists t_m \in [0,\infty)\rightarrow x(t_m)\in\cX_m(\phi_n)$ for $m=0,\dots,n$ by induction.
    
    Starting with $m=0$ case.
    If $\phi_n(0) \leq 0$, we take $t_0=0$ and $x(t_0)=x(0)\in\cX_0(\phi_n)$.
    If $\phi_n(0) > 0$, since $c_{\phi_n}$ is feasible, $\dot{\phi}_n(t) < 0$ always holds whenever $\phi_n(t) \geq 0$.
    Since $\phi_n$ strictly decreases whenever it is positive and $\phi_n(0)<\infty$, $\exists t_0 \in [0,\infty) \rightarrow \phi_n(t_0) \leq 0 \rightarrow x(t_0)\in\cX_0(\phi_n)$.

    For $0\leq m < n$, assume $\exists t_m \rightarrow x(t_m) \in \cX_m(\phi_n)$.
    By \Cref{lma:FIPS}, $\cX_m(\phi_n)$ is invariant.
    Hence $\forall t\in[t_m,\infty)$, $\phi_j(t) \leq 0$ for $j=n-m,\dots,n$.
    If $\phi_{n-m-1}(t_m)\leq 0$, then take $t_{m+1} = t_m$ and we have $x(t_{m+1}) = x(t_m)\in \cX_{m+1}(\phi_n)$.
    If $\phi_{n-m-1}(t_m) > 0$, recall $\forall t\geq t_m$, $\phi_{n-m}(t) = \phi_{n-m-1}(t) + a_{n-m} \dot{\phi}_{n-m-1}(t) \leq 0$ and $a_{n-m} > 0$.
    Then, $\forall t\geq t_m$, whenever $\phi_{n-m-1}(t)> 0$, we have $\dot{\phi}_{n-m-1}(t) \leq -\frac{1}{a_{n-m}}\phi_{n-m-1}(t) < 0$ making $\phi_{n-m-1}(t)$ strictly decrease when it is positive.
    Hence, $\exists t_{m+1} > t_m \rightarrow \phi_{n-m-1}(t_{m+1}) \leq 0 \rightarrow x(t_{m+1})\in\cX_{m+1}(\phi_n)$ which completes the induction.
    It follows that $\exists t_n \in [0,\infty) \rightarrow x(t_n) \in \cX_n(\phi_n) \equiv \cX(\phi_n)$.
    Finally, $x(\tau)\in\cX(\phi_n)$ for $\tau\geq t_n$ since $\cX(\phi_n)$ is invariant by \Cref{thm:FISSA}.
\end{proof}

\begin{remark}
    We proved that by applying constraints on only $\phi_n$, all lower order safety indices $\phi_j$ ($j\in[0,n]$) will reduce to non-positive values in finite-time and stay that way.
    HOCBF (no FTC) \cite{xiao2019control} requires the state to start in the safe set, while our approach allows unsafe initial states.
    \revcolor{
    Besides, at the safety boundary $\phi=0$, our approach enforces a strict decrease of $\phi$, while HOCBF only requires $\dot{\phi} \leq 0$, allowing the system to stay on the boundary.
    Hence, our approach leads to a more restrictive safety behavior than HOCBF.
    }
    % That is due to the recursive relationship between safety indices with adjacent orders ($\phi_j = \phi_{j-1} + a_j \dot{\phi}_{j-1}$).
\end{remark}

We then present the main theory of this paper as follows:
\begin{theorem}[Feasibility of Nonlinear Programming]\label{thm:feas_nonlinear}
    Suppose we are given an $n+1^\mathrm{th}$ order differentiable function $\phi_0$.
    Let $\theta^*=[k^*_1,\dots,k^*_n],\{\bp^*_i\}_{i=1,\dots,2^{N_u}}$ be the solution to \Cref{problem:nonlinear}, then the feasible control law $c_{\phi_{\theta^*}}$ in \Cref{def:feasible_control_law} achieves both forward invariance within the safe set $\cX(\phi_n)$ and finite-time convergence to the safe set $\cX(\phi_n)$.
\end{theorem}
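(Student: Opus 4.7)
The plan is to chain the reductions of \Cref{sec:SIS} into the guarantees of \Cref{thm:FISSA} and \Cref{thm:FCSSA}, by showing the implication chain \Cref{problem:nonlinear} $\Rightarrow$ \Cref{problem:refute} $\Rightarrow$ \Cref{problem:localpos} $\Rightarrow$ \Cref{problem:synthesis}, then verifying that $\phi_{\theta^*}$ is a feasible safety index per \Cref{def:feasible_control_law}, and finally invoking the FI and FTC theorems.

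First, I would show that any solution of \Cref{problem:nonlinear} certifies the emptiness of every refute set in \Cref{problem:refute}. For each sign pattern $(I^+, I^-)$, the PSD condition $Q_i(\theta^*, \bp_i^*) \succeq 0$ yields the decomposition $p_0 = \bx^\top Q_i(\theta^*, \bp_i^*) \bx$, which certifies $p_0$ as a sum-of-squares polynomial. This realizes the Positivstellensatz identity $\gamma + \psi^2 + \zeta = 0$ with $\psi = 1$, so \Cref{thm:pos} forces the refute set in \eqref{eq:refute_set_I} associated to $(I^+, I^-)$ to be empty. Ranging over all $2^{N_u}$ sign patterns completes the reduction to \Cref{problem:refute}.

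Second, I would lift emptiness of every refute set to the pointwise statement of \Cref{problem:localpos}. For arbitrary $x \in \cX$ with $\phi_{\theta^*}(x) \geq 0$, the gradient $L_g \phi_{\theta^*}(x)$ determines a partition $(I^+, I^-)$ of $[N_u]$ (zero entries can be placed on either side consistently). Emptiness of the corresponding refute set forces the bound-aligned linear expression in \eqref{eq:refute_set_I} to fall strictly below $-\eta$, and since this expression equals the componentwise minimum of $\dot\phi_{\theta^*}$ over $\cU(x)$ (a box-constrained linear minimum is attained at the bound aligned with the sign of each $L_g \phi_{\theta^*}[i]$), both \Cref{problem:localpos} and \Cref{problem:synthesis} are solved.

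Finally, I would verify the three conditions of \Cref{def:feasible_control_law} for $c_{\phi_{\theta^*}}$: $\phi_0$ being $n+1$-times differentiable is the theorem's hypothesis; $\phi_{\theta^*}$ solving \Cref{problem:synthesis} was just established; and $a_i > 0$ follows from $\theta^* \in \Theta$ through the relationship between the factored form \eqref{def:phi_recursive} and the expanded form \eqref{def:phi_root}. Once feasibility of $c_{\phi_{\theta^*}}$ is secured, \Cref{thm:FISSA} yields forward invariance within $\cX(\phi_n)$ and \Cref{thm:FCSSA} yields finite-time convergence to $\cX(\phi_n)$, closing the proof. I expect the main subtlety to be this last bridge from $k_i \geq 0$ (the parameter constraint in $\Theta$) to $a_i > 0$ (required by \Cref{def:feasible_control_law}), since recovering \eqref{def:phi_recursive} from \eqref{def:phi_root} requires the characteristic polynomial to have negative real roots, a property not syntactically encoded in $\Theta$ and possibly needing additional care for $n \geq 2$.
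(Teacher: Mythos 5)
Your proposal follows essentially the same chain as the paper's own proof (PSD condition $\Rightarrow$ SOS decomposition of $p_0$ $\Rightarrow$ the Positivstellensatz identity \eqref{eq:pos_feas} $\Rightarrow$ emptiness of every refute set \eqref{eq:refute_set_I} $\Rightarrow$ $\phi_{\theta^*}$ solves \Cref{problem:synthesis} $\Rightarrow$ invoke \Cref{thm:FISSA} and \Cref{thm:FCSSA}), so it is correct in the same sense the paper's proof is. The subtlety you flag at the end is genuine --- $\theta^*\in\Theta$ only enforces $k_i\geq 0$, which does not by itself guarantee that $1+k_1 s+\dots+k_n s^n$ factors with negative real roots so that $a_i>0$ as condition (i) of \Cref{def:feasible_control_law} demands --- but the paper's proof silently skips this step too, asserting feasibility of $c_{\phi_{\theta^*}}$ directly, so your treatment is no weaker than the original.
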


\begin{proof}
    Since $\theta^*=[k^*_1,\dots,k^*_n],\{\bp^*_i\}_{i=1,\dots,2^{N_u}}$ solves \Cref{problem:nonlinear}, then $\forall (I^+,I^-)_i,Q_i(\theta^*,\bp^*_i)\succeq 0$.
    With EVD, we have
    \begin{align}
        p_{0, i} &= \bx^\top Q_i(\theta^*,\bp^*_i)\bx = \bx^\top V_i \Sigma_i V_i^\top \bx \\
        &= \left(\Sigma_i^{1/2}V_i^\top \bx\right)^\top\Sigma_i^{1/2}V_i^\top \bx \in SOS.
    \end{align}
    Hence, \eqref{eq:nonlinear_sos} holds, and equivalently the feasibility condition \eqref{eq:nonlinear_pos} holds.
    Using \Cref{thm:pos}, we know the refute set \eqref{eq:refute_set_I} is empty, and equivalently the set \eqref{eq:refute_set} is empty.
    Then, for all $x$ such that $x\in\cX$ and $\phi_{\theta^*}(x)\geq 0$, we have
    \begin{align}
        &\minimizewrt{u\in\cU(x)} \dot{\phi}_{\theta^*}(x,u)=L_f\phi_{\theta^*} + \\
        &~~~~~~\textstyle\sum_{i=1}^{N_u} L_g\phi_{\theta^*}[i] \delta_{L_g\phi_{\theta^*}[i]}\left(\underline{u}(x)[i], \bar{u}(x)[i]\right) < -\eta
    \end{align}
    which means $\phi_{\theta^*}$ solves \Cref{problem:synthesis} and the control law $c_{\phi_{\theta^*}}$ is feasible as in \Cref{def:feasible_control_law}.
    By \Cref{thm:FISSA} and \Cref{thm:FCSSA}, $c_{\phi_{\theta^*}}$ achieves both forward invariance within the safe set $\cX(\phi_n)$ and finite-time convergence to the safe set $\cX(\phi_n)$.
\end{proof}

\section{Numerical Study}\label{sec:exp}

We apply the proposed SIS approach to a second-order unicycle model with state-dependent control limits.
Applications to other systems such as integrators are similar.

\subsection{Unicycle with State-dependent Control Limits}

Consider a unicycle (see \Cref{fig:agent}) with distance $d$ to a static obstacle\footnote{If there are multiple obstacles, we consider the closest one at each time step. We leave the joint consideration of multiple obstacles as future work.}, longitudinal velocity $v$, relative heading $\alpha$ (with respect to the obstacle), and azimuth angle $\beta$.
\begin{wrapfigure}{r}{0.24\linewidth}
    \vspace{-12pt}
    \centering
    \includegraphics[width=0.9\linewidth]{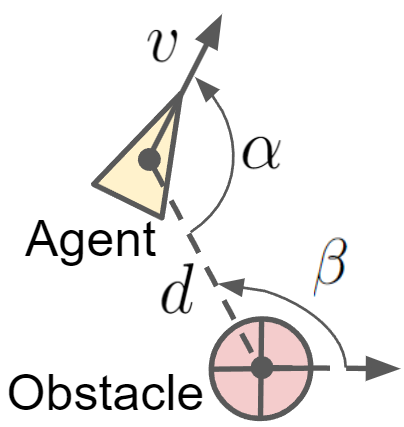}
    \caption{Unicycle.}
    \vspace{-12pt}
    \label{fig:agent}
\end{wrapfigure}
The control includes the acceleration $a\equiv \dot{v} \in[(v_\mathrm{min}-v)/dt, (v_\mathrm{max}-v)/dt]$ and relative heading velocity\footnote{The change of $\alpha$ depends on all of $v$, $d$, and the global heading velocity of the agent. We formulate this way to simplify the derivation of SIS.} $w\equiv \dot{\alpha} \in [w_\mathrm{min}, w_\mathrm{max}]$.
$dt$ is the time interval.
With the state-dependent limits on $a$, the velocity is effectively bounded within $[v_\mathrm{min}, v_\mathrm{max}]$.
With state $x\defeq [d,v,\alpha,\beta]^\top$ and control $u\defeq [a,w]^\top$, the overall dynamics is given by
\begin{equation}\label{eq:example_unicycle}
    \dot{x} = 
    \begin{bmatrix}
        \dot{d} \\ \dot{v} \\ \dot{\alpha} \\ \dot{\beta}
    \end{bmatrix} = \begin{bmatrix}
        -v\cos\alpha \\ 0 \\ 0 \\ -vd^{-1}\sin\alpha
    \end{bmatrix} + \begin{bmatrix}
        0 & 0 \\ 1 & 0 \\ 0 & 1 \\ 0 & 0
    \end{bmatrix}
    \begin{bmatrix}
        a \\ w
    \end{bmatrix}
\end{equation}
With $\phi_0 = d_\mathrm{min}-d$, the goal is to synthesis a safety index $\phi_\theta = \phi_0 + k \dot{\phi}_0$ such that the control law \eqref{eq:safe_control_law} always keeps the agent at least $d_\mathrm{min}$ away from the obstacle.

\subsection{SIS Problem}

By \Cref{thm:feas_nonlinear}, the goal above can be achieved by solving \Cref{problem:nonlinear} which is transformed from \Cref{problem:synthesis}.
Hence, we start with the feasibility condition \eqref{eq:synthesis} for the system \eqref{eq:example_unicycle}:
\begin{equation}\label{eq:example_feas}
\begin{aligned}
    \forall d,\alpha,v\in[v_\mathrm{min}, v_\mathrm{max}], d_\mathrm{min}-d+kv\cos\alpha \geq 0, \\
    \minimizewrt{a,w} v\cos\alpha + ka\cos\alpha - kvw\sin\alpha < -\eta.
\end{aligned}
\end{equation}
With $x\defeq \sin\alpha$, $y\defeq \cos\alpha$, and $z\defeq v$, we can construct the refute set for \eqref{eq:example_feas} as:
\begin{equation}\label{eq:example_refute}
    \begin{cases}
        \gamma_1\defeq yz-k\tilde{w}xz+\frac{1}{dt}k\tilde{v}y-\frac{1}{dt}kyz+\eta \geq 0 \\
        \gamma_2\defeq \II_{xz}xz \geq 0, \gamma_3\defeq \II_{y}y \geq 0, \gamma_4\defeq \II_{x}x \geq 0 \\
        \gamma_5\defeq \II_{x}\II_{y}xy \geq 0 \\
        \gamma_6\defeq -z^2+(v_\mathrm{min}+v_\mathrm{max})z-v_\mathrm{min}v_\mathrm{max} \geq 0 \\
        \gamma_7\defeq kyz+d_\mathrm{min} \geq 0 \\
        \xi \defeq x^2 + y^2 - 1 = 0 \\
    \end{cases}
    \vspace{-1pt}
\end{equation}
for $\II_{xz}=\pm 1$, $\II_{x}=\pm 1$, $\II_{y}=\pm 1$.
$\tilde{w}=w_\mathrm{max}$ if $\II_{xz}=1$ and $\tilde{w}=w_\mathrm{min}$ if $\II_{xz}=-1$.
$\tilde{v}=v_\mathrm{min}$ if $\II_{y}=1$ and $\tilde{v}=v_\mathrm{max}$ if $\II_{y}=-1$.
Hence, we have eight refute sets \eqref{eq:example_refute} to prove empty based on the sign value of $\II_{*}$.
Following \eqref{eq:nonlinear_sos}, for the $i^\mathrm{th}$ assignment ($i\in[8]$) of $(\II_{xz}, \II_{x}, \II_{y})$, we can construct
\begin{equation}\label{eq:nonlinear_sos_simp}
\begin{aligned}
    p_{i,0} = -1-p'_i\xi-p_{i,1}\gamma_{i,1}-\cdots-p_{i,7}\gamma_{i,7}
\end{aligned}
\end{equation}
and decompose as $p_{i,0}= \bx^\top Q_i(\theta,\bp_i)\bx$ where $\bx\defeq[1, x, y, z]^\top$, $\theta\defeq[k]$, and $\bp_i\defeq [p'_i,p_{i,1},\dots,p_{i,7}]$.
By \Cref{thm:feas_nonlinear}, we just need to find $k\geq 0$, $p'_i\in\RR$, $p_{i,1},\dots,p_{i,7}\geq 0$ such that $Q_i(\theta,\bp_i)\succeq 0$ for $i\in[8]$ which can be encoded as a nonlinear programming problem.
\revcolor{
    \eqref{eq:nonlinear_sos_simp} has only the ``first-order'' terms of $p$ which is a valid simplification of \eqref{eq:nonlinear_sos} when $\cX$ is bounded \cite{putinar1993positive}.
    With unbounded $\cX$, \eqref{eq:nonlinear_sos_simp} is generally sufficient but not necessary for \eqref{eq:nonlinear_sos} and saves computation empirically.
    If \eqref{eq:nonlinear_sos_simp} fails, one needs to try the full form \eqref{eq:nonlinear_sos}.
}

\begin{table}[t]
    \centering
    \caption{SIS and Safe Control Results}
    \begin{tabular}{cccc}
    \toprule
         $k$ & Time (\revcolor{s}) & Safe Set ($\%$) & Violations \\
    \midrule
         \revcolor{$0.0100 \pm 3.96e-06$} & \revcolor{24.955 $\pm$ 6.295} & 100.0 $\pm$ 0.0 & 0.0 $\pm$ 0.0 \\
    \bottomrule
    \end{tabular}
    \label{tab:example_res}
    \vspace{-20pt}
\end{table}

\vspace{-15pt}
\subsection{Results}

We set $v_\mathrm{min}=w_\mathrm{min}=-1$, $v_\mathrm{max}=w_\mathrm{max}=1$, $\eta=0.001$ and design the agent to avoid the radius of $d_\mathrm{min}=1$ within the origin.
We implement \Cref{problem:nonlinear} in MATLAB using \verb|fmincon| and apply the PSD condition by enforcing non-negative eigenvalues of $Q_i$ with tolerance $1e-6$.
\revcolor{The eigenvalues are computed in parallel for each $Q_i$ using} \verb|parfor|.
For each synthesized $\phi_n$, we apply the control law \eqref{eq:safe_control_law} to a simulated navigation task with random initial state and goal locations and set $dt=0.01$ s.
We employ a proportional controller to generate the nominal control $u^r$ and find a safe control $u$ with minimal deviation (i.e., $\cJ(u) = \|u-u^r\|$).
% We set the interval $dt=0.01$ seconds.
Considering numerical errors, we randomize the variable initialization and repeat for $10$ times.

We report in \Cref{tab:example_res} the mean and standard deviation of the solution $k$, SIS solving time, whether the agent lands in the safe set (see \Cref{def:safe_set}) before the task ends, and the number of safety violations ($\phi_0>0$) since entering the safe set per task trial.
It is evident that our SIS approach can robustly generate safe control laws that satisfy finite-time convergence ($100\%$ landing in the safe set) and forward invariance ($0$ safety violations) with respect to the safe set.

\section{Conclusion}\label{sec:conclusion}

This paper introduced a safety index synthesis (SIS) approach to finding feasible safe control laws.
We leveraged Positivstellensatz to transform the SIS problem with an infinite number of constraints into nonlinear programming.
The approach was backed with a formal proof of forward invariance within the safe regions, aided by a new characterization of the invariant safe set.
We also proved the finite-time convergence guarantee which was missing in previous work.
The proposed method was applied to a simulated unicycle and achieved robust computation with zero safety violations.

Regarding extensions, it is reasonable to consider nonlinear terms in the general form of the safety index (see \Cref{sec:formulation_safe_control}) while preserving FI and FTC.
One can also extend to consider piece-wise differentiable or even non-differentiable safety indices $\phi_0$ which can emerge in cases such as avoiding multiple obstacles.

\addtolength{\textheight}{-12cm}   % This command serves to balance the column lengths
                                  % on the last page of the document manually. It shortens
                                  % the textheight of the last page by a suitable amount.
                                  % This command does not take effect until the next page
                                  % so it should come on the page before the last. Make
                                  % sure that you do not shorten the textheight too much.

%%%%%%%%%%%%%%%%%%%%%%%%%%%%%%%%%%%%%%%%%%%%%%%%%%%%%%%%%%%%%%%%%%%%%%%%%%%%%%%%

%%%%%%%%%%%%%%%%%%%%%%%%%%%%%%%%%%%%%%%%%%%%%%%%%%%%%%%%%%%%%%%%%%%%%%%%%%%%%%%%

%%%%%%%%%%%%%%%%%%%%%%%%%%%%%%%%%%%%%%%%%%%%%%%%%%%%%%%%%%%%%%%%%%%%%%%%%%%%%%%%
% \section*{APPENDIX}

% Appendixes should appear before the acknowledgment.

% \section*{ACKNOWLEDGMENT}

% The preferred spelling of the word ÒacknowledgmentÓ in America is without an ÒeÓ after the ÒgÓ. Avoid the stilted expression, ÒOne of us (R. B. G.) thanks . . .Ó  Instead, try ÒR. B. G. thanksÓ. Put sponsor acknowledgments in the unnumbered footnote on the first page.

%%%%%%%%%%%%%%%%%%%%%%%%%%%%%%%%%%%%%%%%%%%%%%%%%%%%%%%%%%%%%%%%%%%%%%%%%%%%%%%%

\bibliographystyle{IEEEtran}
\bibliography{ref}

\end{document}